\newtheorem{theorem}{Theorem}
\newtheorem{assumption}{Assumption}
\newtheorem{proposition}{Proposition}
\newtheorem{remark_temp}[theorem]{Remark}
\begin{document}
%
% paper title
% can use linebreaks \\ within to get better formatting as desired
\title{Spacecraft Attitude Stabilization with Piecewise-constant Magnetic Dipole Moment}

% author names and affiliations
% use a multiple column layout for up to three different
% affiliations
\author{\IEEEauthorblockN{Fabio Celani}
\IEEEauthorblockA{Department of Astronautical, Electrical, and Energy Engineering\\
Sapienza University of Rome\\
00138 Rome, Italy\\
Email: fabio.celani@uniroma1.it}
}

% conference papers do not typically use \thanks and this command
% is locked out in conference mode. If really needed, such as for
% the acknowledgment of grants, issue a \IEEEoverridecommandlockouts
% after \documentclass

% for over three affiliations, or if they all won't fit within the width
% of the page, use this alternative format:
% 
%\author{\IEEEauthorblockN{Michael Shell\IEEEauthorrefmark{1},
%Homer Simpson\IEEEauthorrefmark{2},
%James Kirk\IEEEauthorrefmark{3}, 
%Montgomery Scott\IEEEauthorrefmark{3} and
%Eldon Tyrell\IEEEauthorrefmark{4}}
%\IEEEauthorblockA{\IEEEauthorrefmark{1}School of Electrical and Computer Engineering\\
%Georgia Institute of Technology,
%Atlanta, Georgia 30332--0250\\ Email: see http://www.michaelshell.org/contact.html}
%\IEEEauthorblockA{\IEEEauthorrefmark{2}Twentieth Century Fox, Springfield, USA\\
%Email: homer@thesimpsons.com}
%\IEEEauthorblockA{\IEEEauthorrefmark{3}Starfleet Academy, San Francisco, California 96678-2391\\
%Telephone: (800) 555--1212, Fax: (888) 555--1212}
%\IEEEauthorblockA{\IEEEauthorrefmark{4}Tyrell Inc., 123 Replicant Street, Los Angeles, California 90210--4321}}

% use for special paper notices
%\IEEEspecialpapernotice{(Invited Paper)}

% make the title area
\maketitle

\begin{abstract}
%\boldmath
In actual implementations of magnetic control laws for spacecraft attitude stabilization, the time in which Earth magnetic field is measured must be separated from the time in which magnetic dipole moment is generated. The latter separation translates into the constraint of being able to genere only piecewise-constant magnetic dipole moment. In this work we present attitude stabilization laws using only magnetic actuators that take into account of the latter aspect. Both a state feedback and an output feedback are presented, and it is shown that the proposed design allows for a systematic selection of  the sampling period.
\end{abstract}
% IEEEtran.cls defaults to using nonbold math in the Abstract.
% This preserves the distinction between vectors and scalars. However,
% if the conference you are submitting to favors bold math in the abstract,
% then you can use LaTeX's standard command \boldmath at the very start
% of the abstract to achieve this. Many IEEE journals/conferences frown on
% math in the abstract anyway.

% no keywords

\begin{IEEEkeywords}
attitude control of spacecraft; magnetic actuators; control algorithms implementation; averaging; Lyapunov stability.
\end{IEEEkeywords}

% For peer review papers, you can put extra information on the cover
% page as needed:
% \ifCLASSOPTIONpeerreview
% \begin{center} \bfseries EDICS Category: 3-BBND \end{center}
% \fi
%
% For peerreview papers, this IEEEtran command inserts a page break and
% creates the second title. It will be ignored for other modes.
\IEEEpeerreviewmaketitle

\section{Introduction}
% no \IEEEPARstart
Spacecraft attitude control can be obtained by adopting several actuation mechanisms. Among them electromagnetic actuators are widely used for generation of attitude control torques on satellites flying low Earth orbits. They consist of planar current-driven coils rigidly placed on the spacecraft typically along three orthogonal axes, and they operate on the basis of the interaction between the magnetic moment generated by those coils and the Earth's magnetic field; in fact, the interaction with the Earth's field generates a torque that attempts to align the total magnetic moment in the direction of the field.
%The interest in such devices, also known as magnetorquers, is due to the following reasons: (i) they are simple, reliable, and low cost (ii) they need only renewable electrical power to be operated; (iii) using magnetorquers it is possible to modulate smoothly the control torque so that unwanted couplings with flexible modes, which could harm pointing precision, are not induced; (iv) magnetorquers save system weight  with respect to any other class of actuators.  On the other hand, 
Magnetic actuators, also known as magnetorquers, have the important limitation that control torque is constrained to belong to the plane orthogonal to the Earth's magnetic field. As a result, sometimes a reaction wheel accompanies magnetorquers to provide full three-axis control (see \cite[Section 7.4]{sidi97}); moreover, magnetorquers are often used for angular momentum dumping when reaction or momentum-bias wheels are employed for accurate attitude control (see \cite[Section 7.5]{sidi97}). Lately,  attitude stabilization using only magnetorquers has been considered as a feasible option especially for low-cost micro-satellites and for satellites with a failure in the main attitude control system. In such scenario many control laws have been designed, and a survey of various approaches adopted can be found in \cite{Silani:2005kx}; in particular, Lyapunov based design has been adopted in
\cite{Arduini:1997kx, Wisniewski:1999uq, Lovera:2004vn, Lovera:2005fk, celani_aa15}. In those works 
feedback control laws that require measures of both attitude and attitude-rate (i.e. state feedback control laws) are proposed; moreover, in \cite{Lovera:2004vn}  and \cite{celani_aa15} feedback control algorithms  which need measures of attitude only (i.e. output feedback control algorithms) are presented, too.

All control laws presented in the cited works are continuos-time; thus, in principle they require a continuous-time measurement of the geomagnetic field and a continuos-time generation of coils's currents. However, in practical implementations the time in which Earth's magnetic field is measured (measurement time) has to be separated from the time in which coils' currents are generated  (actuation time). The latter separation is necessary because currents flowing in the spacecraft's magnetic coils generate a local magnetic field that impairs the measurement of the geomagnetic field; as a result, when the Earth magnetic field is being measured no currents must flow in the coils, and consequently no magnetic actuation is possible. 
%on the other hand, when magnetic actuation is active, it is not possible to obtain accurate measurements of the geomagnetic field. 
As a result, in practical applications a periodic switch between measurement time and actuation time is implemented. Usually measurement time is set much shorter than actuation time, so that the spacecraft is left unactuated for a very short fraction of the period; thus, the measurement process can be modeled as instantaneous, and during each period the magnetic dipole moment is kept constant since only the measurement at the beginning of the period is available. Thus the separation  illustrated before translates into the constraint of being able to generate only constant magnetic dipole moments during each period. Then, magnetic control laws compatible with the latter constraint could be obtained by simply discretizing continuos-time control algorithms (see e.g. \cite{pulecchi_cst10}); however, here a different approach is proposed; it consists in taking into account of the latter constraint directly during the control design phase. It will be shown that the proposed approach leads to a systematic selection of the  sampling interval; the latter aspect represents an advantage with respect to discretization of continuous-time control laws where the sampling interval is usually determined by trial an error.  The proposed method is presented for both a state and an output feedback .

\subsection{Notations} For $x \in \mathbb{R}^{n}$, $\|x\|$ denotes the Eucledian norm of $x$; for matrix $A$, $\|A\|$ denotes the 2-norm of $A$. Symbol $I$ represents the identity matrix. For $a \in \mathbb{R}^3$, $a^{\times}$ represents the skew symmetric matrix 
\begin{equation} \label{skew_symmetric}
a^{\times} =
\left[
\begin{array}{rcl}
  0 & -a_3 & a_2\\
  a_3 & 0 & -a_1\\
  -a_2 & a_1 & 0
\end{array}
\right]
\end{equation} so that
for $b  \in \mathbb{R}^3$, multiplication $a^{\times}b$ is equal to the cross product $a \times b$.

\section{Modeling and control algorithms}\label{modeling}

In order to describe the attitude dynamics of an Earth-orbiting rigid spacecraft, and in order to represent the geomagnetic field, it is useful to introduce the following reference frames.

\begin{enumerate}
  \item \emph{Earth-centered inertial frame} $\mathcal{F}_i$. A commonly used inertial frame for Earth orbits is the Geocentric Equatorial Frame, whose origin is in the Earth's center, its $x_i$ axis is the vernal equinox direction, its $z_i$ axis coincides with the Earth's axis of rotation and points northward,  and its $y_i$ axis completes an orthogonal right-handed frame (see \cite[Section 2.6.1]{sidi97} ).
   \item \emph{Spacecraft body frame} $\mathcal{F}_b$. The origin of this right-handed orthogonal frame attached to the spacecraft, coincides with the satellite's center of mass; its axes are chosen so that the inertial pointing objective is having  $\mathcal{F}_b$ aligned with $\mathcal{F}_i$.
\end{enumerate}

Since the inertial pointing objective consists in aligning $\mathcal{F}_b$ to $\mathcal{F}_i$, the focus will be on the relative kinematics and dynamics of the satellite with respect to the inertial frame. Let 
%\begin{equation*}
$q=[q_1\ q_2\ q_3\ q_4]^{\textrm{T}}=[q_v^{\textrm{T}}\ q_4]^{\textrm{T}}$
%\end{equation*}
with $\|q\|=1$ be the unit quaternion representing rotation of  $\mathcal{F}_b$ with respect to $\mathcal{F}_i$; then, the corresponding attitude matrix  is given by 
 \begin{equation} \label{A_bo}
C(q)=(q_4^2-q_v^{\textrm{T}} q_v)I+2 q_v q_v^{\textrm{T}}-2 q_4 q_v^{\times}
\end{equation}
(see \cite[Section 5.4]{wie98}).

Let
\begin{equation} \label{w}
W(q)=\dfrac{1}{2}\left[\begin{array}{c}q_4 I + q_v^{\times} \\-q_v^{\textrm{T}}\end{array}\right]
\end{equation}
Then the relative attitude kinematics is given by 
\begin{equation} \label{kinem_body}
\dot q = W(q) \omega
\end{equation} where $\omega \in \mathbb{R}^{3}$ is the angular rate of $\mathcal{F}_b$ with respect to $\mathcal{F}_i$ resolved in $\mathcal{F}_b$ (see \cite[Section 5.5.3]{wie98}).

The attitude dynamics in body frame can be expressed by 
\begin{equation} \label{euler_eq}
J \dot \omega =- \omega^{\times} J \omega + T
\end{equation} where $J \in \mathbb{R}^{3 \times 3}$ is the spacecraft inertia matrix, and $Q \in \mathbb{R}^{3}$ is the control torque expressed in $\mathcal{F}_b$ (see \cite[Section 6.4]{wie98}). 

The spacecraft is equipped with three magnetic coils aligned with the $\mathcal{F}_b$ axes which generate the magnetic  control torque
\begin{equation} \label{t_coils}
Q=m_{coils} \times B^b=-B^{b \times}\ m_{coils}
\end{equation} where $m_{coils} \in \mathbb{R}^3$ is the vector of magnetic dipole moments for the three coils, and $B^b$ is the geomagnetic field at spacecraft expressed in body frame $\mathcal{F}_b$. 
%From the previous equation, we see that magnetic torque can only be perpendicular to geomagnetic field.

Let $B^i$ be the geomagnetic field at spacecraft expressed in inertial frame $\mathcal{F}_i$. Note that $B^i$ varies with time at least because of the spacecraft's motion along the orbit. Then
\begin{equation} \label{B_b}
B^b(q,t)=C(q)B^i(t)
\end{equation}  
which shows explicitly the dependence of $B^b$ on both $q$ and $t$.

Grouping together equations (\ref{kinem_body}) (\ref{euler_eq}) (\ref{t_coils}) the following nonlinear time-varying system is obtained
\begin{equation} \label{nonlin_tv}
\begin{array}{rcl}
  \dot q &=& W(q) \omega\\
  J \dot \omega &=& - \omega^{\times} J \omega -B^{b}(q,t)^{\times}\ m_{coils} \end{array}
\end{equation} in which $m_{coils}$ is the control input.

In order to design control algorithms, it is important to characterize the time-dependence of $B^b(q,t)$ which is the same as  characterizing the time-dependence of $B^i(t)$. Assume that the orbit is circular of radius $R$; then, adopting the so called dipole model of the geomagnetic field (see \cite[Appendix H]{wertz78}) we obtain  \begin{equation} \label{geomegnetic_model}
B^i(t)=\frac{\mu_m}{R^3}[3 ((\hat m^i)^{\textrm{T}} \hat{r}^i(t)) \hat{r}^i(t) - \hat m^i  ]
\end{equation}
In equation (\ref{geomegnetic_model}), $\mu_m=7.746\ 10^{15}$ Wb m
is the total dipole strength (see \cite{Rodriguez-Vazquez:2012fk}), $r^i(t)$ is the spacecraft's position vector resolved in $\mathcal{F}_i$, and $\hat r^i(t)$ is the vector of the direction cosines of $r^i(t)$; finally $\hat m^i$ is the vector of direction cosines of the Earth's magnetic dipole moment expressed in $\mathcal{F}_i$. Here  we set $\hat m^i=[0\ 0\ -1]^{\textrm{T}}$ which corresponds to setting the dipole's colelevation equal to $180^{\circ}$. Even if a more  accurate value for that angle would be $170^{\circ}$ (see \cite{Rodriguez-Vazquez:2012fk}), here we approximate it to $180^{\circ}$  since this will substantially simplify future symbolic computations.

Equation (\ref{geomegnetic_model}) shows that in order to characterize the time dependence of $B^i(t)$ it is necessary to determine an expression for  $r^i(t)$ which is the spacecraft's position vector resolved in $\mathcal{F}_i$.  Define a coordinate system $x_p$, $y_p$ in the orbital's plane whose origin is at Earth's center; then, the position of satellite's center of mass is clearly given by
\begin{equation} \label{x_p_y_p}
\begin{array}{rcl}
  x^p(t) &=& R \cos(nt + \phi_0)\\
  y^p(t) &=& R \sin(nt + \phi_0)
\end{array}
\end{equation} where $n$ is the orbital rate, and $\phi_0$ an initial phase. Then, coordinates of the satellite in inertial frame $\mathcal{F}_i$ can be easily obtained from (\ref{x_p_y_p}) using an appropriate rotation matrix which depends on the orbit's inclination $incl$ and on the right ascension of the ascending node $\Omega$ (see \cite[Section 2.6.2]{sidi97}). Plugging  into (\ref{geomegnetic_model}) the expression of the latter coordinates, an explicit expression for $B^i(t)$ can be obtained; it can be easily checked that $B^i(t)$ is periodic with period $2 \pi/n$. Consequently system (\ref{nonlin_tv}) is a periodic nonlinear system with period $2 \pi/n$.

As stated before, the control objective is driving the spacecraft so that $\mathcal{F}_b$ is aligned with $\mathcal{F}_i$. From (\ref{A_bo}) it follows that $C(q)=I$ for $q=[q_v^{\textrm{T}}\ q_4]^{\textrm{T}}=\pm \bar q$ where $\bar q= [0\ 0\ 0\ 1]^{\textrm{T}}$. Thus,  the objective is designing control strategies for $m_{coils}$ so that $q_v \rightarrow 0$ and $\omega \rightarrow 0$. 

In \cite{celani_aa15} both a static state feedback and a dynamic output feedback are presented; both feedbacks were obtained as modifications of those in \cite{Lovera:2004vn, Lovera:2005fk}. 

The static state feedback control  proposed in \cite{celani_aa15} is given by
\begin{equation} \label{m_coils_cc}
m_{coils}(t)= (B^{b}(q(t),t)^{\times})^{\textrm{T}} (\epsilon^2 k_1 q_v(t) + \epsilon k_2 \omega(t))
\end{equation} and it is shown that picking $k_1>0$, $k_2>0$, and choosing $\epsilon>0$ and small enough, local exponential stability of $(q,\omega)=(\bar q, 0)$ is achieved. 

The dynamic output feedback proposed in \cite{celani_aa15} is given by
\begin{equation} \label{output_feedback_law}
\begin{array}{rcl}
\dot \delta(t) &=& \alpha (q(t) - \epsilon \lambda \delta(t))\\
m_{coils}(t) &=& (B^{b}(q(t),t)^{\times})^{\textrm{T}} \epsilon^2 \left( k_1 q_v(t)  \right. \\
&& \left. + k_2 \alpha \lambda W(q(t))^{\textrm{T}} (q(t)-\epsilon \lambda \delta(t)) \right)  
\end{array}
\end{equation}
with $\delta \in \mathbb{R}^4$. It is shown that selecting $k_1>0$, $k_2>0$, $\alpha>0$, $\lambda>0$,  and choosing $\epsilon>0$ small enough, local exponentially stability of equilibrium $(q,\omega,\delta)=(\bar q, 0,\frac{1}{\epsilon \lambda} \bar q)$ is achieved.

Both control laws (\ref{m_coils_cc}) and (\ref{output_feedback_law}) seem implementable in practice since $B^b(q(t),t)$ can be measured using magnetometers, $q_v(t)$ can be measured using attitude sensors,  and $\omega(t)$ can be measured by attitude rate sensors. However, as explained in the introduction, the time in which $B^b$ is measured should be separated from the time in which control action is applied; in fact, magnetic control torque is obtained by generating currents flowing through magnetic coils, and those currents create a local magnetic field which impairs the measurement of $B^b$. As a result, in order to take into account of the previous constraint, we should consider only feedback laws in which $B^b$ is instantaneously measured at the beginning of some fixed length interval, $m_{coils}$ is held constant over that interval, and the latter operations are repeated periodically. In the sequel $T$ will denote the sampling interval, that is the length of  each of those intervals.

 A static state feedback law that fulfills the previous constraint can be designed by simply discretizing (\ref{m_coils_cc}) through a sample-and-hold operation, thus obtaining the following control algorithm 
\begin{multline} \label{m_coils_dc}
m_{coils}(t)= (B^{b}(q(kT),kT)^{\times})^{\textrm{T}} (\epsilon^2 k_1 q_v(kT) + \epsilon k_2 \omega(k T))\\ k=0,1,2,\ldots \ \ \ kT \leq t < (k+1) T
\end{multline}

Similarly, a dynamic output feedback law compatible with the constraint of generating a piecewise-constant signal for $m_{coils}$ can be obtained by discretizing (\ref{output_feedback_law}) using the forward differencing approximation (see \cite{fadali_visoli_12}), thus obtaining the following control law
\begin{equation} \label{output_feedback_law_discrete}
\begin{array}{rcl}
\delta((k+1)T) &=& \delta(kT)+T \alpha (q(kT) - \epsilon \lambda \delta(kT))\\[1mm]
m_{coils}(t) &=& (B^{b}(q(kT),kT)^{\times})^{\textrm{T}} \epsilon^2 \left( k_1 q_v(kT)  \right. \\[1mm]
&& \left. + k_2 \alpha \lambda W(q(kT))^{\textrm{T}} (q(kT)-\epsilon \lambda \delta(kT)) \right)\\[1mm]
&& k=0,1,2,\ldots \ \ \ kT \leq t < (k+1) T  
\end{array}
\end{equation}
with $\delta \in \mathbb{R}^4$.

%\section{Stability Analysis of the Closed-loop Hybrid Systems}
\section{State-feedback Design}

In this section we will focus on state-feedback law (\ref{m_coils_dc}) and show how to choose the parameters $k_1$, $k_2$, $\epsilon$, and the sampling interval $T$  so that equilibrium $(q,\omega)=(\bar q, 0)$ is locally exponentially stable for closed-loop system (\ref{nonlin_tv}) (\ref{m_coils_dc}). In order to derive those design criteria, it suffices considering the restriction of closed-loop system (\ref{nonlin_tv}) (\ref{m_coils_dc}) to the open set $\mathbb{S}^{3+} \times \mathbb{R}^3$ where 
\begin{equation} \label{S3}
\mathbb{S}^{3+}=\{ q \in \mathbb{R}^4\  |\  \|q\|=1,\ q_{4}>0\} 
\end{equation} Since on the latter set $q_{4}=(1-q_{v}^{\textrm{T}} q_{v})^{1/2}$  then the considered restriction is given by the  following nonlinear time-varying hybrid system
\begin{multline} \label{closed_loop_reduced}
\begin{array}{rcl}
  \dot q_{v}(t) &=& W_v(q_{v}(t)) \omega(t) \\
  J \dot \omega(t) &=& - \omega(t)^{\times} J \omega(t) - [R_v(q_v(t)) B^i(t)]^{\times} m_{coils}(kT)
  \end{array}\\
 \ \ \ \ \ \ \ \ \ \ \ \ \ \ \ \ \ \ \ \ \ \ \ \ \ \ \ \ \ \ \ \ \ \ \ \ \ \ \ \ \ kT \leq t \leq (k+1)T\\[2mm]
 m_{coils}(kT)=[R_v(q_v(kT)) B^i(kT)]^{\times \textrm{T}} (\epsilon^2 k_1 q_v(kT) \\ + \epsilon k_2 \omega(kT))\\
 k=0,1,2,\ldots
\end{multline}
where
\begin{equation} \label{w_v}
W_v(q_{v})=\dfrac{1}{2} \left[ \left( 1-q_{v}^{\textrm{T}} q_{v} \right)^{1/2} I + q_{v}^{\times} \right]
\end{equation} and
 \begin{equation} \label{a_v}
R_v(q_{v})=\left( 1-2 q_{v}^{\textrm{T}} q_{v} \right) I+2 q_{v} q_{v}^{\textrm{T}}-2 \left( 1- q_{v}^{\textrm{T}} q_{v} \right)^{1/2} q_{v}^{\times}
\end{equation} and where (\ref{B_b}) has been used. It is simple to show that if $(q_v,\omega)=(0, 0)$ is a locally exponentially stable equilibrium for  (\ref{closed_loop_reduced}), then $(q,\omega)=(\bar q, 0)$ is a locally exponentially stable equilibrium for (\ref{nonlin_tv}) (\ref{m_coils_dc}).

Next, consider the linear approximation of (\ref{closed_loop_reduced}) about the equilibrium $(q_v,\omega)=(0, 0)$ as defined in  \cite{Mancilla-Aguilar:2003fk}. The latter approximation is given by
\begin{equation} \label{closed_loop_linearized}
\begin{array}{rcl}
  \dot q_{v}(t) &=& \frac{1}{2} \omega(t) \\
  J \dot \omega(t) &=&  - B^i(t)^{\times} m_{coils}(kT)
  \end{array}\\
 \ \ \ \ \ kT \leq t \leq (k+1)T
 \end{equation}
 \begin{multline} \label{m_coils_dt}
 m_{coils}(kT)=B^i(kT)^{\times T} (\epsilon^2 k_1 q_v(kT) + \epsilon k_2 \omega(kT))\\
 k=0,1,2,\ldots
\end{multline}
Note that since $B^i$ is bounded (see (\ref{geomegnetic_model})), then Theorem II.1 in \cite{Mancilla-Aguilar:2003fk} applies to the nonlinear time-varying hybrid system (\ref{closed_loop_reduced}); consequently, $(q_v,\omega)=(0, 0)$  is a locally exponentially stable equilibrium for (\ref{closed_loop_reduced}) if and only if it is an exponentially stable equilibrium for (\ref{closed_loop_linearized}) (\ref{m_coils_dt}).

Next consider  the continuous-time system (\ref{closed_loop_linearized}) and sample its state $[q_v(t)^{\textrm{T}} \omega(t)^{\textrm{T}}]^{\textrm{T}}$ at $t=(k+1)T$. The following discrete-time system is thus obtained
\begin{multline} \label{sampled_system}
\begin{array}{rcl}
  q_v((k+1)T) &=& q_v(kT)+\frac{T}{2} \omega(kT)\\[2 mm]
 && -J^{-1}G_1(k,T) m_{coils}(kT)\\[2 mm]
 \omega((k+1)T) &=& \omega(kT)-J^{-1}G_2(k,T) m_{coils}(kT)\\
\end{array}\\ k=0,1,2\ldots
\end{multline} where
\begin{equation*}
G_1(k,T)=\int_{kT}^{(k+1)T} \frac{1}{2}((k+1)T-\tau) B^i(\tau)^{\times} d \tau
\end{equation*}
and
\begin{equation*}
G_2(k,T)=\int_{kT}^{(k+1)T} B^i(\tau)^{\times} d \tau
\end{equation*} By  \cite[Proposition 7]{Iglesias:1995fj} it follows that if the linear time-varying discrete-time system (\ref{m_coils_dt}) (\ref{sampled_system}) is exponentially stable then the linear time-varying hybrid system (\ref{closed_loop_linearized}) (\ref{m_coils_dt}) is exponentially stable.

Based on the previous discussion, our objective has become studying stability of linear time-varying discrete-time system (\ref{m_coils_dt}) (\ref{sampled_system}). For that purpose it is useful to perform the following change of variables $z_1(k)=q_v(kT)$, $z_2(k)= \omega(kT)/\epsilon$  obtaining
\begin{equation} \label{z_system}
\begin{array}{rcl}
  z_1(k+1) &=& z_1(k)+\epsilon \frac{T}{2} z_2(k)- \epsilon^2 J^{-1} G_1(k,T)B^i(kT)^{\times T}\\[1 mm]&& ( k_1 z_1(k) +  k_2 z_2(k))\\[2 mm]
  z_2(k+1) &=& z_2(k)-\epsilon J^{-1} G_2(k,T)B^i(kT)^{\times T}\\[1 mm]
  &&( k_1 z_1(k) +  k_2 z_2(k))
\end{array}
\end{equation} 
Note that since $G_i(k,0)=0\ \ i=1,2$, then $G_i(k,T)$ can be expressed as 
%(see \cite[p. 433]{IsVol1})  
\begin{equation}
G_i(k,T)=H_i(k,T) T\ \ i=1,2
\end{equation} where
\begin{equation*}
H_i(k,T)=\int_0^1 \frac{\partial G_i}{\partial T}(k,sT)ds\ \ i=1,2
\end{equation*} Let $L(k,T)=H_2(k,T) B^i(kT)^{\times T}$. Note that since $B^i(t)$ is periodic, then the following average
\begin{equation*}
L_{av}(T)=\lim_{N \rightarrow \infty} \frac{1}{N} \sum_{k=s+1}^{k=s+N} L(k,T)
\end{equation*} is well defined and indipendent of integer $s$. The expression of $L_{av}$ has been computed symbolically using Mathematica\textsuperscript{TM}, but here it is omitted to save space. It turns out that $L_{av}^0 \triangleq \lim_{T \rightarrow 0} L_{av}(T)$ is symmetric, and that $L_{av}^0$ is positive definite if and only if the orbit is not equatorial, that is if its inclination satisfies $incl \neq 0$. Thus, we make the following assumption.

\begin{assumption} \label{orbit_hypo}
The spacecraft's orbit satisfies condition $ L_{av}^0>0$.
\end{assumption}

At this point, consider the average system of (\ref{z_system}) as defined in \cite{Bai:1988fk} which is given by
\begin{equation} \label{average_system}
\begin{array}{rcl}
  z_1(k+1) &=& z_1(k)+\epsilon \frac{T}{2} z_2(k)\\[2 mm]
  z_2(k+1) &=& z_2(k)-\epsilon T J^{-1} L_{av}(T)( k_1 z_1(k) +  k_2 z_2(k))
\end{array}
\end{equation}
It can be verified that all assumptions of \cite[Theorem 2.2.2]{Bai:1988fk}  are fulfilled by system (\ref{z_system}). Thus, the following proposition holds true.

\begin{proposition} \label{prop_averaged}
If the average system (\ref{average_system}) is exponentially stable for all $0<\epsilon \leq \epsilon_0$, then there exists $0<\epsilon_1 \leq \epsilon_0$, such that system (\ref{z_system}) is exponentially stable for all $0<\epsilon\leq \epsilon_1$.
\end{proposition}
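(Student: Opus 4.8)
The plan is to cast (\ref{z_system}) into the standard single-parameter form required by the discrete-time averaging result \cite[Theorem 2.2.2]{Bai:1988fk}, to identify its average with (\ref{average_system}), and then to invoke that theorem directly. Writing $z(k)=[z_1(k)^{\textrm{T}}\ z_2(k)^{\textrm{T}}]^{\textrm{T}}$, system (\ref{z_system}) has the form $z(k+1)=z(k)+\epsilon f(k,z(k),\epsilon)$, where $f$ is affine in $z$ and, using $G_2(k,T)=H_2(k,T)T$ together with $L(k,T)=H_2(k,T)B^i(kT)^{\times T}$, its value at $\epsilon=0$ is
\begin{equation*}
f(k,z,0)=\left[\begin{array}{c}\frac{T}{2}z_2 \\[1mm] -T J^{-1}L(k,T)(k_1 z_1+k_2 z_2)\end{array}\right]
\end{equation*}
the full vector field $f(k,z,\epsilon)$ differing from this only by a term of order $\epsilon$ in the first block that is bounded uniformly in $k$.

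First I would form the Ces\`aro average of $f(\cdot,z,0)$ over $k$. Since $B^i$ is periodic, so is $L(k,T)$; hence the limit $L_{av}(T)=\lim_{N\to\infty}\frac{1}{N}\sum_{k=s+1}^{s+N}L(k,T)$ exists, is independent of $s$, and the partial averages converge to it at rate $O(1/N)$. The first block of $f(\cdot,z,0)$ is already independent of $k$, so the averaged vector field is
\begin{equation*}
f_{av}(z)=\left[\begin{array}{c}\frac{T}{2}z_2 \\[1mm] -T J^{-1}L_{av}(T)(k_1 z_1+k_2 z_2)\end{array}\right]
\end{equation*}
and the associated averaged recursion $z(k+1)=z(k)+\epsilon f_{av}(z(k))$ coincides exactly with (\ref{average_system}).

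Next I would check the remaining hypotheses of \cite[Theorem 2.2.2]{Bai:1988fk}. The boundedness and periodicity of $B^i$ established earlier make $G_1,G_2$---and therefore $f$---bounded uniformly in $k$ for $T$ in a compact set; moreover $f$ is affine, hence globally Lipschitz, in $z$ with a $k$-uniform Lipschitz constant, and is jointly continuous in $(z,\epsilon)$. The uniform convergence of the partial averages of $f(\cdot,z,0)-f_{av}(z)$ to zero, which is the genuinely quantitative requirement, follows from the $O(1/N)$ rate noted above. These are precisely the regularity and averaging conditions demanded by the theorem.

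With all hypotheses in place, the conclusion is immediate: under the standing assumption of the proposition that (\ref{average_system}) is exponentially stable for every $0<\epsilon\le\epsilon_0$, \cite[Theorem 2.2.2]{Bai:1988fk} supplies an $\epsilon_1\in(0,\epsilon_0]$ for which (\ref{z_system}) is exponentially stable for all $0<\epsilon\le\epsilon_1$, as claimed. Since the logical skeleton is a verbatim application of the averaging theorem, I expect no conceptual obstacle; the only real work is the bookkeeping of the preceding hypothesis checks, and in particular the quantitative use of the periodicity of $B^i$ to control the partial-average error rather than merely to guarantee existence of $L_{av}(T)$.
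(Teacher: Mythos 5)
Your route is the same as the paper's: the paper establishes this proposition simply by asserting that all assumptions of \cite[Theorem 2.2.2]{Bai:1988fk} are fulfilled by system (\ref{z_system}) and then invoking that theorem, and your casting of (\ref{z_system}) into the form $z(k+1)=z(k)+\epsilon f(k,z(k),\epsilon)$, the computation of $f(k,z,0)$, and the identification of its average with (\ref{average_system}) are exactly the bookkeeping the paper leaves implicit.

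There is, however, one step in your hypothesis check that is wrong as stated, and it sits precisely where you say the ``genuinely quantitative'' work is done. Periodicity of $t\mapsto B^i(t)$ (period $2\pi/n$) does \emph{not} imply that the sequence $k\mapsto L(k,T)$ is periodic: that holds only when $T$ is commensurate with $2\pi/n$, i.e. when $nT/(2\pi)\in\mathbb{Q}$, and nothing in the setup restricts $T$ in this way. For incommensurate $T$ the sequence is only quasi-periodic, and for a general continuous periodic signal sampled along an irrational rotation the Ces\`aro averages can converge arbitrarily slowly (Liouville-type frequency ratios), so the inference ``periodic, hence $O(1/N)$'' breaks down. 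The gap is repairable, but by structure you did not invoke: under the dipole model (\ref{geomegnetic_model}), (\ref{x_p_y_p}), $B^i(t)$ is a trigonometric polynomial in $nt$, hence $L(k,T)$ is a finite sum of modes $c_j e^{i m_j n T k}$; the non-resonant modes ($e^{i m_j n T}\neq 1$) have partial sums bounded uniformly in $s$ and $N$ by geometric summation, which yields uniform convergence of the partial averages (indeed at rate $O(1/N)$, with constant depending on $T$), while resonant modes are constants absorbed into $L_{av}(T)$. Note also that \cite[Theorem 2.2.2]{Bai:1988fk} needs only uniform convergence of the partial averages, not a specific rate; with this substitution your verification---and hence the proposition---goes through.
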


For the average system (\ref{average_system}) the following stability result holds.

\begin{theorem}  \label{stability_averaged}
Assume that $k_1>0$, $k_2>0$ and Assumption \ref{orbit_hypo} is satisfied. Then there exists $T^*>0$, and for every $0<T<T^*$, there exists $\epsilon_0>0$, such that fixed $0<T<T^*$ system  (\ref{average_system}) is exponentially stable for all $0<\epsilon \leq \epsilon_0$.
\end{theorem}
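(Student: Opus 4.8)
The plan is to exploit the fact that the averaged system (\ref{average_system}) is \emph{linear time-invariant}: $L_{av}(T)$ depends on $T$ but not on the step $k$. Writing $x(k)=[z_1(k)^{\textrm{T}}\ z_2(k)^{\textrm{T}}]^{\textrm{T}}$, the system is $x(k+1)=A(\epsilon,T)x(k)$ with
\begin{equation*}
A(\epsilon,T)=I+\epsilon M(T),\qquad M(T)=\left[\begin{array}{cc} 0 & \frac{T}{2}I\\ -Tk_1 J^{-1}L_{av}(T) & -Tk_2 J^{-1}L_{av}(T)\end{array}\right].
\end{equation*}
Exponential stability of a discrete-time LTI system is equivalent to its spectral radius being strictly below one. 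The eigenvalues of $A(\epsilon,T)$ are exactly $1+\epsilon\mu$ as $\mu$ ranges over the eigenvalues of $M(T)$, and $|1+\epsilon\mu|^2=1+2\epsilon\,\mathrm{Re}\,\mu+\epsilon^2|\mu|^2$, so each satisfies $|1+\epsilon\mu|<1$ for all small enough $\epsilon>0$ precisely when $\mathrm{Re}\,\mu<0$. Hence, for a fixed $T$, the required $\epsilon_0>0$ exists (one may take any $\epsilon_0$ below $\min_i 2|\mathrm{Re}\,\mu_i|/|\mu_i|^2$ over the finitely many eigenvalues) as soon as $M(T)$ is Hurwitz.

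It then remains to produce $T^*>0$ with $M(T)$ Hurwitz for all $0<T<T^*$. Since every block of $M(T)$ carries a factor $T$, I would write $M(T)=T\tilde M(T)$ with
\begin{equation*}
\tilde M(T)=\left[\begin{array}{cc} 0 & \frac{1}{2}I\\ -k_1 J^{-1}L_{av}(T) & -k_2 J^{-1}L_{av}(T)\end{array}\right];
\end{equation*}
because $T>0$ scales all eigenvalues by a positive number, $M(T)$ is Hurwitz iff $\tilde M(T)$ is. As $L_{av}(T)\to L_{av}^0$ when $T\to 0$, $\tilde M(T)$ converges to the matrix $\tilde M^0$ obtained by replacing $L_{av}(T)$ with $L_{av}^0$. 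By continuity of the spectrum it suffices to show $\tilde M^0$ is Hurwitz; the sought $T^*$ is then any radius on which the finitely many continuous eigenvalue branches remain in the open left half-plane.

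The core is thus the Hurwitz property of $\tilde M^0$, and this is where Assumption \ref{orbit_hypo} enters. Set $P=J^{-1}L_{av}^0$. The inertia matrix $J$ is symmetric positive definite and, by Assumption \ref{orbit_hypo}, so is $L_{av}^0$; hence $P$ is similar to the symmetric positive definite matrix $J^{-1/2}L_{av}^0 J^{-1/2}$, so $P$ is diagonalizable with strictly positive real eigenvalues $p_1,p_2,p_3$. Since the $(1,1)$ block of $\tilde M^0-sI$ is scalar and commutes with the remaining blocks, the block-determinant identity gives
\begin{equation*}
\det(\tilde M^0-sI)=\det\!\big(s^2 I+(k_2 s+\tfrac{k_1}{2})P\big)=\prod_{j=1}^{3}\big(s^2+k_2 p_j\, s+\tfrac{k_1}{2}p_j\big),
\end{equation*}
so the spectrum of $\tilde M^0$ is the union of the roots of three monic quadratics, each with strictly positive coefficients ($k_2 p_j>0$, $\tfrac{k_1}{2}p_j>0$ because $k_1,k_2>0$ and $p_j>0$); by Routh--Hurwitz for second-order polynomials all roots lie in the open left half-plane, so $\tilde M^0$ is Hurwitz and the chain of implications closes. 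I expect the main obstacle to be exactly this last step: establishing that the non-symmetric $P=J^{-1}L_{av}^0$ has positive real eigenvalues (via the symmetrization $J^{-1/2}L_{av}^0 J^{-1/2}$), which is what forces the quadratic factors into the sign pattern demanded by Routh--Hurwitz. The LTI reduction and the continuity/eigenvalue-perturbation arguments are then routine.
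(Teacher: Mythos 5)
Your proof is correct, and while it follows the same skeleton as the paper's --- show that the $T\to 0$ limit of the system matrix (your $\tilde M^0$, the paper's $A_s^0$) is Hurwitz, invoke continuity of the spectrum to get $T^*$, then for fixed $T<T^*$ extract an $\epsilon_0$ --- both key steps are carried out by genuinely different means. For the Hurwitz property, the paper introduces the continuous-time system $\dot w = A_s^0 w$ and the Lyapunov function $V_1(w_1,w_2)=k_1 w_1^{\textrm{T}} L_{av}^0 w_1 + \tfrac12 w_2^{\textrm{T}} J w_2$, concluding via La Salle's invariance principle; you instead factor the characteristic polynomial through the block-determinant identity into the three quadratics $s^2+k_2 p_j s+\tfrac{k_1}{2}p_j$, after establishing that $J^{-1}L_{av}^0$ has positive real eigenvalues by symmetrization --- an explicit spectral argument that avoids invariance-principle machinery entirely (note both arguments lean on the paper's claim that $L_{av}^0$ is symmetric, not just positive definite). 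For the $\epsilon_0$ step, you exploit the fact that the averaged system is LTI so that exponential stability is exactly spectral radius below one, yielding the tight bound $\epsilon_0 < \min_i 2|\mathrm{Re}\,\mu_i|/|\mu_i|^2$; the paper instead solves the Lyapunov equation $P_s A_s(T)+A_s(T)^{\textrm{T}}P_s=-I$ and bounds the difference of $V_2(z)=z^{\textrm{T}}P_s z$, which is more conservative but produces the closed-form expression (\ref{epsilon_0}) that is referenced verbatim in Proposition \ref{main_prop} and used numerically in the case study ($\epsilon_0 = 1.3\ 10^{-3}$). Your spectral-radius bound could serve the same role (it is equally computable and in fact sharper), but one practical advantage of the paper's Lyapunov route is that it transfers wholesale to the output-feedback case: Theorem \ref{stability_averaged_output} reuses the identical template with a four-block matrix $A_o(T)$, whereas your determinant factorization would have to be redone for that larger, less symmetric structure.
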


\begin{proof}    
Rewrite (\ref{average_system}) in the following compact form
\begin{equation} \label{averaged_compact}
z(k+1)=z(k)+\epsilon T A_s(T) z(k)
\end{equation} with
\begin{equation} \label{A_s}
A_s(T)=
\left[
\begin{array}{cc}
0 & \frac{1}{2}I\\[3mm]
-k_1 J^{-1} L_{av}(T) & -k_2 J^{-1} L_{av}(T)
\end{array}
\right]
\end{equation} It will be shown now that $A_s^0 \triangleq \lim_{T \rightarrow 0} A_s(T)$ is a Hurwitz matrix. For that purpose it is useful to introduce the continuous-time system $\dot w=A_s^0 w$ 
 which in expanded form reads as follows
\begin{equation} \label{linearized_time_varying}
\begin{array}{rcl}
  \dot w_1 &=& \dfrac{1}{2} w_2\\
   J \dot w_2 &=&  - L_{av}^0 ( k_1 w_1 +  k_2 w_2)
 \end{array}
\end{equation}
Introduce the Lyapunov function for (\ref{linearized_time_varying})
\begin{equation*}
V_1(w_{1},w_2)= k_1w_{1}^{\textrm{T}} L_{av}^0 w_{1}+\frac{1}{2}   w_2^{\textrm{T}} J  w_2
\end{equation*}
then
\begin{equation*}
\dot V_1(w_{1},w_2)=- k_2 w_2^{\textrm{T}} L_{av}^0 w_2
\end{equation*}
By La Salle's invariance principle \cite[Theorem 4.4]{khalil00}, it follows that  (\ref{linearized_time_varying}) is exponentially stable, and thus $A_s^0$ is Hurwitz. Then, by continuity there exists $T^*>0$ such that $A_s(T)$ is Hurwitz for all $0 < T<T^*$. Next, fix $0<T<T^*$, and consider symmetric matrix $P_s>0$ such that
\begin{equation*}
P_s A_s(T)+A_s(T)^{\textrm{T}} P_s = -I
\end{equation*} Let $V_2(z)=z^{\textrm{T}} P_s z$ be a candidate Lyapunov function for system (\ref{averaged_compact}). Then, the following holds
\begin{multline*}
\Delta V_2(z) \triangleq (z+\epsilon T A_s(T) z)^{\textrm{T}} P_s (z+\epsilon T A_s(T) z)-z^{\textrm{T}} P_s z\\
\leq \epsilon T (-1+\epsilon T  \| A_s(T)^{\textrm{T}} P_s A_s(T) \| ) \| z \|^2
\end{multline*} As a result, setting
\begin{equation} \label{epsilon_0}
\epsilon_0=\frac{1}{2 T \| A_s(T)^{\textrm{T}} P_s A_s(T) \|}
\end{equation}
 we obtain that system (\ref{averaged_compact}) is exponentially stable for all $0<\epsilon \leq \epsilon_0$.

\end{proof}

Based on the previous theorem, fix $0<T<T^*$, and consider the corresponding value of $\epsilon_0$ given by (\ref{epsilon_0}); from Proposition 2 it follows that there exists $0 < \epsilon_1 \leq \epsilon_0$ such that system (\ref{z_system}) is exponentially stable for all $0<\epsilon\leq \epsilon_1$. In conclusion, from Theorem \ref{stability_averaged}, Proposition \ref{prop_averaged}, and the preceding discussion, the following main proposition is obtained.

\begin{proposition} \label{main_prop}
\emph{(State feedback).} Consider the magnetically actuated spacecraft described by (\ref{nonlin_tv}) and apply the piecewise-constant static state-feedback (\ref{m_coils_dc}) with $k_1>0$ and $k_2>0$. Then, under Assumption \ref{orbit_hypo}, there exists $T^*>0$, and for every $0<T<T^*$, there exists $\epsilon_1 \leq \epsilon_0$, with $\epsilon_0$ given by (\ref{epsilon_0}), such that fixed $0<T<T^*$, for all $0<\epsilon \leq \epsilon_1$ equilibrium $(q,\omega)=(\bar q, 0)$ is locally exponentially stable for (\ref{nonlin_tv}) (\ref{m_coils_dc}).
\end{proposition}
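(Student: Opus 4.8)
The plan is to assemble the chain of equivalences and implications already set up in this section, read from the average system back up to the original nonlinear closed loop. The statement is essentially a corollary of Theorem~\ref{stability_averaged} together with the three transfer results invoked earlier (Proposition~\ref{prop_averaged}, \cite[Proposition~7]{Iglesias:1995fj}, and \cite[Theorem~II.1]{Mancilla-Aguilar:2003fk}) and the reduction remark following (\ref{a_v}); the work therefore lies in checking that the hypotheses of each are genuinely met and, above all, in nesting the quantifiers over $T$ and $\epsilon$ in the correct order.

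First I would fix the quantifier structure. Theorem~\ref{stability_averaged} supplies a $T^*>0$ such that, for each fixed $0<T<T^*$, there is an $\epsilon_0>0$, given explicitly by (\ref{epsilon_0}), making the average system (\ref{average_system}) exponentially stable for all $0<\epsilon\le\epsilon_0$. Keeping $T$ fixed and this $\epsilon_0$ in hand, the hypothesis of Proposition~\ref{prop_averaged} holds verbatim, so it delivers an $\epsilon_1\le\epsilon_0$ for which system (\ref{z_system}) is exponentially stable for all $0<\epsilon\le\epsilon_1$.

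Next I would undo the change of variables and climb the remaining links. For any fixed $\epsilon>0$ the map $(z_1,z_2)=(q_v(kT),\omega(kT)/\epsilon)$ is a linear isomorphism, so exponential stability of (\ref{z_system}) is equivalent to exponential stability of the sampled system (\ref{sampled_system}) with (\ref{m_coils_dt}). Applying \cite[Proposition~7]{Iglesias:1995fj} then lifts this to exponential stability of the linear time-varying hybrid system (\ref{closed_loop_linearized}) (\ref{m_coils_dt}), and, since $B^i$ is bounded, \cite[Theorem~II.1]{Mancilla-Aguilar:2003fk} yields local exponential stability of $(q_v,\omega)=(0,0)$ for the nonlinear reduced system (\ref{closed_loop_reduced}). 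The implication stated immediately after (\ref{a_v})---that local exponential stability of $(0,0)$ for (\ref{closed_loop_reduced}) implies local exponential stability of $(\bar q,0)$ for (\ref{nonlin_tv}) (\ref{m_coils_dc})---then closes the argument for every $0<\epsilon\le\epsilon_1$ at the fixed $T$.

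The step requiring the most care is the bookkeeping of the quantifiers rather than any new estimate: $T$ must be fixed before $\epsilon_0$ and $\epsilon_1$ are extracted, because both depend on $T$ through $A_s(T)$ and $P_s$, and the entire chain of implications must be run at that fixed $T$. A secondary point I would verify explicitly is that the rescaling by $1/\epsilon$ does not degrade exponential stability: for fixed $\epsilon$ it leaves the decay rate unchanged and alters only the transient constant by the condition number of the coordinate map, so the equivalence used above is legitimate.
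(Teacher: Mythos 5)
Your proposal is correct and follows essentially the same route as the paper: the paper's own argument is precisely this assembly, fixing $0<T<T^*$ from Theorem~\ref{stability_averaged}, taking $\epsilon_0$ from (\ref{epsilon_0}), invoking Proposition~\ref{prop_averaged} to get $\epsilon_1\le\epsilon_0$ for (\ref{z_system}), and then climbing back through the change of variables, \cite[Proposition~7]{Iglesias:1995fj}, \cite[Theorem~II.1]{Mancilla-Aguilar:2003fk}, and the reduction remark after (\ref{a_v}). Your explicit attention to the quantifier nesting and to the benignity of the $\omega/\epsilon$ rescaling makes the argument, if anything, slightly more complete than the paper's terse statement of the same chain.
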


%\begin{remark}
%Note that usually adopting the classical discretization approach, the sampling interval $T$ is chosen by trial and error and there is no theoretical
%A practical application of the previous proposition will be shown in the next section. Note that the choice of the parameter $\epsilon$ depends on the value chosen the sampling interval $T$. In the classical discretization approach would have 
%\end{remark}

\section{Output-feedback Design}

In this section we will focus on output-feedback law (\ref{output_feedback_law_discrete}) and show how to choose parameters $k_1$, $k_2$, $\alpha$, $\lambda$, $\epsilon$, and sampling interval $T$ so that equilibrium $(q,\omega,\delta)=(\bar q, 0,\frac{1}{\epsilon \lambda} \bar q)$  is locally exponentially stable for closed-loop system (\ref{nonlin_tv}) (\ref{output_feedback_law_discrete}).

Following the same approach as in state-feedback design, we determine the linear time-varying discrete-time system given by the interconnection of (\ref{sampled_system}) with the following system
\begin{equation} 
\begin{array}{rcl} \label{output_fdbk_discrete}
  \delta_v((k+1)T) &=& \delta_v(kT)+T \alpha (q_v(kT)-\epsilon \lambda \delta_v(kT))\\
  \tilde \delta_4((k+1)T) &=& \tilde \delta_4(kT)-T \alpha \epsilon \lambda \tilde \delta_4(kT)\\[2mm]
  m_{coils}(kT) &=& (B^{i}(q(kT),kT)^{\times}){\textrm{T}} \epsilon^2 \left[ k_1 q_v(kT)  \right. \\[1mm]
&& \left. + \frac{1}{2} k_2 \alpha \lambda  (q_v(kT)-\epsilon \lambda \delta_v(kT)) \right]
\end{array}
\end{equation} where $\delta_v=[\delta_1\ \delta_2\ \delta_3]^{\textrm{T}}$ and $\tilde \delta_4=\delta_4-\frac{1}{\epsilon \delta}$. Similarly to the state feedback case, if system (\ref{output_fdbk_discrete}) is exponentially stable, then $(q,\omega,\delta)=(\bar q, 0,\frac{1}{\epsilon \lambda} \bar q)$ is an exponentially stable equilibrium for (\ref{nonlin_tv}) (\ref{output_feedback_law_discrete}).

For system (\ref{output_fdbk_discrete}), perform the change of variables $z_1(k)=q_v(kT)$, $z_2(k)= \omega(kT)/\epsilon$, $z_3(k)=q_v(kT)-\epsilon \lambda \delta_v(kT)$, $z_4(k)= \tilde \delta_4(kT)$ obtaining
\begin{equation} \label{z_system_output}
\begin{array}{rcl}
  z_1(k+1) &=& z_1(k)+\epsilon \frac{T}{2} z_2(k) - \epsilon^2 J^{-1} G_1(k,T)\\[1mm]
  && (B^i(kT)^{\times})^{\textrm{T}} \left( k_1 z_1(k) +  \frac{1}{2}k_2 \alpha \lambda z_3(k) \right)\\[3 mm]
  z_2(k+1) &=& z_2(k)-\epsilon J^{-1} G_2(k,T)(B^i(kT)^{\times})^{\textrm{T}}\\[2mm]
  &&\left( k_1 z_1(k) +   \frac{1}{2}k_2 \alpha \lambda z_3(k) \right)\\[3mm]
  z_3(k+1) &=& z_3(k) + \epsilon T \left(  \frac{1}{2}z_2(k)-\alpha \lambda z_3(k)  \right)\\[3 mm]
 && - \epsilon^2 J^{-1} G_1(k,T)(B^i(kT)^{\times})^{\textrm{T}}\\[2mm]
 && \left( k_1 z_1(k) +  \frac{1}{2}k_2 \alpha \lambda z_3(k) \right)\\[3 mm]
  z_4(k+1) &=& z_4(k) - \epsilon T \alpha  \lambda z_4(k)
\end{array}
\end{equation} 
The average system of (\ref{z_system}) as defined in \cite{Bai:1988fk} is given by
\begin{equation} \label{average_system_output}
\begin{array}{rcl}
  z_1(k+1) &=& z_1(k)+\epsilon \frac{T}{2} z_2(k)\\[2 mm]
  z_2(k+1) &=& z_2(k)-\epsilon T J^{-1} L_{av}(T)( k_1 z_1(k) +  k_2 z_2(k))\\[2 mm]
  z_3(k+1) &=& z_3(k)+\epsilon T \left( k_1 z_1(k) +   \frac{1}{2}k_2 \alpha \lambda z_3(k) \right)\\[4 mm]
  z_4(k+1) &=& z_4(k) -  \epsilon T \alpha \lambda z_4(k)
\end{array}
\end{equation}
Then, a proposition parallel to Proposition \ref{prop_averaged} holds true, and  for the average system (\ref{average_system_output}) the following stability result holds.

\begin{theorem}  \label{stability_averaged_output}
Assume that $k_1>0$, $k_2>0$, $\alpha>0$, $\lambda>0$ and Assumption \ref{orbit_hypo} is satisfied. Then there exists $T^*>0$, and for every $0<T<T^*$, there exists $\epsilon_0>0$, such that fixed $0<T<T^*$ system  (\ref{average_system_output}) is exponentially stable for all $0<\epsilon \leq \epsilon_0$.
\end{theorem}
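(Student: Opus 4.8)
The plan is to follow, \emph{mutatis mutandis}, the route used in the proof of Theorem~\ref{stability_averaged}. First I would rewrite the averaged system (\ref{average_system_output}) in the compact form
\begin{equation*}
z(k+1)=z(k)+\epsilon T A_o(T) z(k),
\end{equation*}
with $z=[z_1^{\textrm{T}}\ z_2^{\textrm{T}}\ z_3^{\textrm{T}}\ z_4]^{\textrm{T}}$ and $A_o(T)$ the corresponding block system matrix. As in the state-feedback case, everything reduces to proving that $A_o^0 \triangleq \lim_{T \rightarrow 0} A_o(T)$ is Hurwitz: once this is shown, continuity yields a $T^*>0$ with $A_o(T)$ Hurwitz for all $0<T<T^*$; then, fixing $0<T<T^*$, solving the Lyapunov equation $P_o A_o(T)+A_o(T)^{\textrm{T}}P_o=-I$ and bounding the increment $\Delta V_2(z)$ of $V_2(z)=z^{\textrm{T}}P_o z$ exactly as before yields a threshold $\epsilon_0$ and exponential stability for all $0<\epsilon \leq \epsilon_0$. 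Since that last portion is \emph{verbatim} the computation of Theorem~\ref{stability_averaged}, the whole burden of the proof falls on the Hurwitz claim.

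To establish it, I would study the continuous-time limit system $\dot w = A_o^0 w$, whose expanded form (using $\lim_{T\to 0}L_{av}(T)=L_{av}^0$) reads
\begin{equation*}
\begin{array}{rcl}
\dot w_1 &=& \frac{1}{2} w_2\\
J \dot w_2 &=& -L_{av}^0 \left( k_1 w_1 + \frac{1}{2} k_2 \alpha \lambda w_3 \right)\\
\dot w_3 &=& \frac{1}{2} w_2 - \alpha \lambda w_3\\
\dot w_4 &=& -\alpha \lambda w_4.
\end{array}
\end{equation*}
The scalar equation for $w_4$ is decoupled and trivially exponentially stable because $\alpha\lambda>0$, so it can be set aside and attention restricted to the $(w_1,w_2,w_3)$ subsystem, in which the filter state $w_3$ replaces the velocity feedback that $w_2$ provided in the state-feedback design.

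The key step, and the one I expect to be the main obstacle, is guessing the right Lyapunov function for this subsystem. I would augment $V_1$ by an $L_{av}^0$-weighted penalty on $w_3$,
\begin{equation*}
V_3(w_1,w_2,w_3)=k_1 w_1^{\textrm{T}} L_{av}^0 w_1 + \frac{1}{2} w_2^{\textrm{T}} J w_2 + c\, w_3^{\textrm{T}} L_{av}^0 w_3,
\end{equation*}
with $c>0$ to be determined. Differentiating along trajectories and exploiting the symmetry of $L_{av}^0$, the $k_1 w_1^{\textrm{T}} L_{av}^0 w_2$ contributions cancel just as in Theorem~\ref{stability_averaged}, and the surviving $w_2$--$w_3$ cross terms collect into $\left(c-\frac{1}{2}k_2\alpha\lambda\right) w_2^{\textrm{T}} L_{av}^0 w_3$. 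The decisive choice $c=\frac{1}{2}k_2\alpha\lambda$ annihilates this cross term and leaves
\begin{equation*}
\dot V_3 = -k_2 (\alpha \lambda)^2\, w_3^{\textrm{T}} L_{av}^0 w_3 \leq 0
\end{equation*}
by Assumption~\ref{orbit_hypo}. Identifying this weight is precisely what turns an indefinite derivative into a negative semidefinite one; without it the filter coupling spoils the sign.

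Finally I would close the argument with LaSalle's invariance principle. Since $L_{av}^0>0$, the set $\{\dot V_3=0\}$ is $\{w_3=0\}$; on any invariant trajectory there, $\dot w_3=0$ forces $w_2=0$, and then $\dot w_2=0$ together with $L_{av}^0>0$ forces $w_1=0$, so the only invariant subset is the origin. By \cite[Theorem 4.4]{khalil00} the $(w_1,w_2,w_3)$ subsystem is exponentially stable, hence so is $\dot w = A_o^0 w$, i.e. $A_o^0$ is Hurwitz. The continuity and discrete-time Lyapunov steps outlined above then complete the proof in exactly the same manner as Theorem~\ref{stability_averaged}.
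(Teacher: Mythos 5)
Your proposal is correct and follows essentially the same route as the paper: the same compact form and Hurwitz reduction, the same Lyapunov function (your choice $c=\frac{1}{2}k_2\alpha\lambda$ is exactly the paper's weight on the $w_3$ term), and the same LaSalle argument. The only cosmetic difference is that you decouple the scalar $w_4$ equation at the outset, whereas the paper keeps it and adds $\frac{1}{2}w_4^2$ to the Lyapunov function, yielding the extra term $-\alpha\lambda w_4^2$ in $\dot V_3$.
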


\begin{proof}    
Rewrite (\ref{average_system_output}) in the following compact form
\begin{equation} \label{averaged_compact_output}
z(k+1)=z(k)+\epsilon T A_o(T) z(k)
\end{equation} with
\begin{multline*}
A_o(T)=\\
\left[
\begin{array}{cccc}
0 & \frac{1}{2}I & 0 & 0\\[3mm]
-k_1 J^{-1} L_{av}(T) & 0 & - \frac{1}{2} k_2 \alpha \lambda J^{-1} L_{av}(T) & 0\\
0 & \frac{1}{2}I & -\alpha \lambda I & 0\\[2mm]
0 & 0 & 0 & -\alpha \lambda
\end{array}
\right]
\end{multline*} It will be shown now that $A_o^0 \triangleq \lim_{T \rightarrow 0} A_o(T)$ is a Hurwitz matrix. For that purpose it is useful to introduce the continuous-time system $\dot w=A_o^0 w$ 
 which in expanded form reads as follows
\begin{equation} \label{linearized_time_varying_output}
\begin{array}{rcl}
  \dot w_1 &=& \dfrac{1}{2} w_2\\
   J \dot w_2 &=&  - L_{av}^0 \left( k_1 w_1 +  \frac{1}{2} k_2 \alpha \lambda w_3 \right)\\
   \dot w_3 &=& \frac{1}{2} w_2 - \alpha \lambda w_3\\
   \dot w_4 &=& - \alpha \lambda w_4
 \end{array}
\end{equation}
Introduce the Lyapunov function for (\ref{linearized_time_varying_output})
\begin{equation*}
V_3(w)= k_1w_{1}^{\textrm{T}} L_{av}^0 w_{1}+\frac{1}{2}   w_2^{\textrm{T}} J  w_2 +\frac{1}{2} k_2 \alpha \lambda  w_3^{\textrm{T}} L_{av}^0  w_3 + \frac{1}{2} w_4^2
\end{equation*}
then
\begin{equation*}
\dot V_3(w)=- k_2 \alpha^2 \lambda^2 w_3^{\textrm{T}} L_{av}^0 w_3 - \alpha \lambda w_4^2
\end{equation*}
By La Salle's invariance principle \cite[Theorem 4.4]{khalil00}, it follows that  (\ref{linearized_time_varying_output}) is exponentially stable, and thus $A_o^0$ is Hurwitz. Then, the proof can be completed as in the state feedback case. 
In particular, fixing $T$ small enough so that $A_o(T)$ is Hurwitz, and letting $P_o>0$ such that
\begin{equation*}
P_o A_o(T)+A_o(T)^{\textrm{T}} P_o = -I
\end{equation*} we have that system  (\ref{averaged_compact_output}) is exponentially stable for all $0<\epsilon \leq \epsilon_0$ with
\begin{equation} \label{epsilon_0_output}
\epsilon_0=\frac{1}{2 T \| A_o(T)^{\textrm{T}} P_o A_o(T) \|}
\end{equation}

\end{proof}

Thus, the following main proposition holds. 

\begin{proposition} \label{main_prop_output}
\emph{(Output feedback).} Consider the magnetically actuated spacecraft described by (\ref{nonlin_tv}) and apply the piecewise-constant dynamic output-feedback (\ref{output_feedback_law_discrete}) with $k_1>0$, $k_2>0$, $\alpha>0$, and $\lambda>0$. Then, under Assumption \ref{orbit_hypo}, there exists $T^*>0$, and for every $0<T<T^*$, there exists $\epsilon_1 \leq \epsilon_0$, with $\epsilon_0$ given by (\ref{epsilon_0_output}), such that fixed $0<T<T^*$ for all $0<\epsilon \leq \epsilon_1$ equilibrium $(q,\omega,\delta)=(\bar q, 0,\frac{1}{\epsilon \lambda} \bar q)$ is locally exponentially stable for (\ref{nonlin_tv}) (\ref{output_feedback_law_discrete}).
\end{proposition}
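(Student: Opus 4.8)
The plan is to assemble the final result exactly as in the state-feedback case, reading the chain of implications backwards from the averaged system (\ref{average_system_output}) up to the full nonlinear hybrid closed loop (\ref{nonlin_tv}) (\ref{output_feedback_law_discrete}). First I would fix an arbitrary $T$ in the interval $(0,T^*)$ furnished by Theorem \ref{stability_averaged_output}, so that the averaged system (\ref{average_system_output}) is exponentially stable for every $0<\epsilon\le\epsilon_0$, with $\epsilon_0$ given by (\ref{epsilon_0_output}). Invoking the proposition parallel to Proposition \ref{prop_averaged} (the discrete-time averaging theorem of \cite[Theorem 2.2.2]{Bai:1988fk} applied to the augmented system (\ref{z_system_output})), I would obtain $0<\epsilon_1\le\epsilon_0$ such that (\ref{z_system_output}) itself is exponentially stable for all $0<\epsilon\le\epsilon_1$.

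Next I would climb back up the sequence of equivalences. Undoing the change of variables $z_1=q_v(kT)$, $z_2=\omega(kT)/\epsilon$, $z_3=q_v(kT)-\epsilon\lambda\delta_v(kT)$, $z_4=\tilde\delta_4(kT)$ transfers exponential stability of (\ref{z_system_output}) to the linear time-varying discrete-time interconnection of (\ref{sampled_system}) and (\ref{output_fdbk_discrete}); since, for each fixed $\epsilon>0$, this scaling is an invertible linear map, exponential stability is preserved. Then, exactly as in the state-feedback argument, \cite[Proposition 7]{Iglesias:1995fj} lifts this to exponential stability of the corresponding linear time-varying \emph{hybrid} system, and Theorem II.1 of \cite{Mancilla-Aguilar:2003fk}---applicable because $B^i$ is bounded---identifies that hybrid system with the linear approximation of the reduced nonlinear hybrid closed loop about its equilibrium, yielding local exponential stability of that equilibrium. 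Finally, since the target attitude $\bar q$ lies in the chart $\mathbb{S}^{3+}$ where $q_4=(1-q_v^{\textrm{T}}q_v)^{1/2}$, the same elementary argument used for (\ref{closed_loop_reduced}) shows that local exponential stability of the reduced system implies local exponential stability of $(q,\omega,\delta)=(\bar q,0,\frac{1}{\epsilon\lambda}\bar q)$ for the full system (\ref{nonlin_tv}) (\ref{output_feedback_law_discrete}).

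The step I expect to require the most care is the proposition parallel to Proposition \ref{prop_averaged}: one must verify that every hypothesis of \cite[Theorem 2.2.2]{Bai:1988fk} is met by the enlarged system (\ref{z_system_output}), not merely by the three-component system (\ref{z_system}). In particular the extra components $z_3$ and $z_4$ must be checked to inherit the requisite regularity, boundedness, and, crucially, the existence of a well-defined average; the latter follows because all time-varying coefficients are built from the same periodic $B^i(t)$ and the associated limit $L_{av}(T)$ already shown to exist. A secondary point needing attention is that the $\delta$-equilibrium $\frac{1}{\epsilon\lambda}\bar q$ depends on $\epsilon$: the shift $\tilde\delta_4=\delta_4-\frac{1}{\epsilon\lambda}$ together with the variable $z_3$ is precisely what recenters the dynamics at the origin, so that the Mancilla-Aguilar linearization and the chart restriction apply uniformly in $\epsilon$. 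Once these two points are settled, the remaining algebra is routine and mirrors the state-feedback case verbatim.
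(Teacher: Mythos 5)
Your proposal is correct and follows essentially the same route as the paper: it assembles Theorem \ref{stability_averaged_output}, the averaging proposition parallel to Proposition \ref{prop_averaged} applied to (\ref{z_system_output}), and then climbs back through the change of variables, \cite[Proposition 7]{Iglesias:1995fj}, the linearization result of \cite{Mancilla-Aguilar:2003fk}, and the $\mathbb{S}^{3+}$ chart reduction, exactly as the paper does (the paper leaves this assembly implicit, merely stating ``Thus, the following main proposition holds''). Your two flagged points of care---verifying the hypotheses of \cite[Theorem 2.2.2]{Bai:1988fk} for the enlarged system and handling the $\epsilon$-dependent equilibrium via the shift $\tilde\delta_4$ and $z_3$---are precisely the details the paper glosses over, so spelling them out is a faithful (indeed slightly more complete) rendering of the intended argument.
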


\section{Case study}

We consider the same case study presented in \cite{celani_aa15}  in which the spacecraft's inertia matrix is given by
$J=\text{diag}[27\ 17 \ 25] \ \text{kg m}^2$; the inclination of the orbit is given by $incl=87^{\circ}$, and the orbit's altitude is 450 km; the corresponding 
orbital period is about 5600 s. Without loss of generality the right ascension of the ascending node $\Omega$ is set equal to 0, whereas the initial phase $\phi_0$ (see (\ref{x_p_y_p})) has been randomly selected and set equal to $\phi_0=0.94$ rad.

Consider an initial state characterized by attitude equal to to the target attitude $q(0)=\bar q$, and by the following high initial angular rate
\begin{equation} \label{omega_0}
\omega(0)=[0.02\ \ 0.02\ \ -0.03]^{\textrm{T}}\ \text{rad/s}
\end{equation} due for example to an impact with an object.

In \cite{celani_aa15} the continuous-time feedback (\ref{m_coils_cc}) has been designed setting $k_1=2 \ 10^{11}$, $k_2=3\ 10^{11}$, and $\epsilon=10^{-3}$. Here, we keep $k_1=2 \ 10^{11}$, $k_2=3\ 10^{11}$ and parameters $T$ and $\epsilon$ are chosen as follows. First, studying numerically the behavior of the eigenvalues of  $A_s(T)$ with $T$ (see (\ref{A_s})), we determine the value $T^*=1490$ s for which it occurs that $A_s(T)$ is Hurwitz for all $0 < T < T^*$. Then, the sampling time has been set equal to $T=20$ to which there corresponds $\epsilon_0=1.3\ 10^{-3}$ (see (\ref{epsilon_0})). By Proposition \ref{main_prop}, it occurs that setting $\epsilon \leq 1.3\ 10^{-3}$ and small enough, equilibrium $(q,\omega)=(\bar q, 0)$ is locally exponentially stable for (\ref{nonlin_tv}) (\ref{m_coils_dc}). Here, proceeding by trial and error, we fix $\epsilon=10^{-3}$. Simulations results reported in Fig. \ref{evolutions_state_fdbck} show that actually acquisition of the desired attitude is achieved.
\begin{figure}[h] 
\centering
\subfigure{\includegraphics[width=8.4cm]{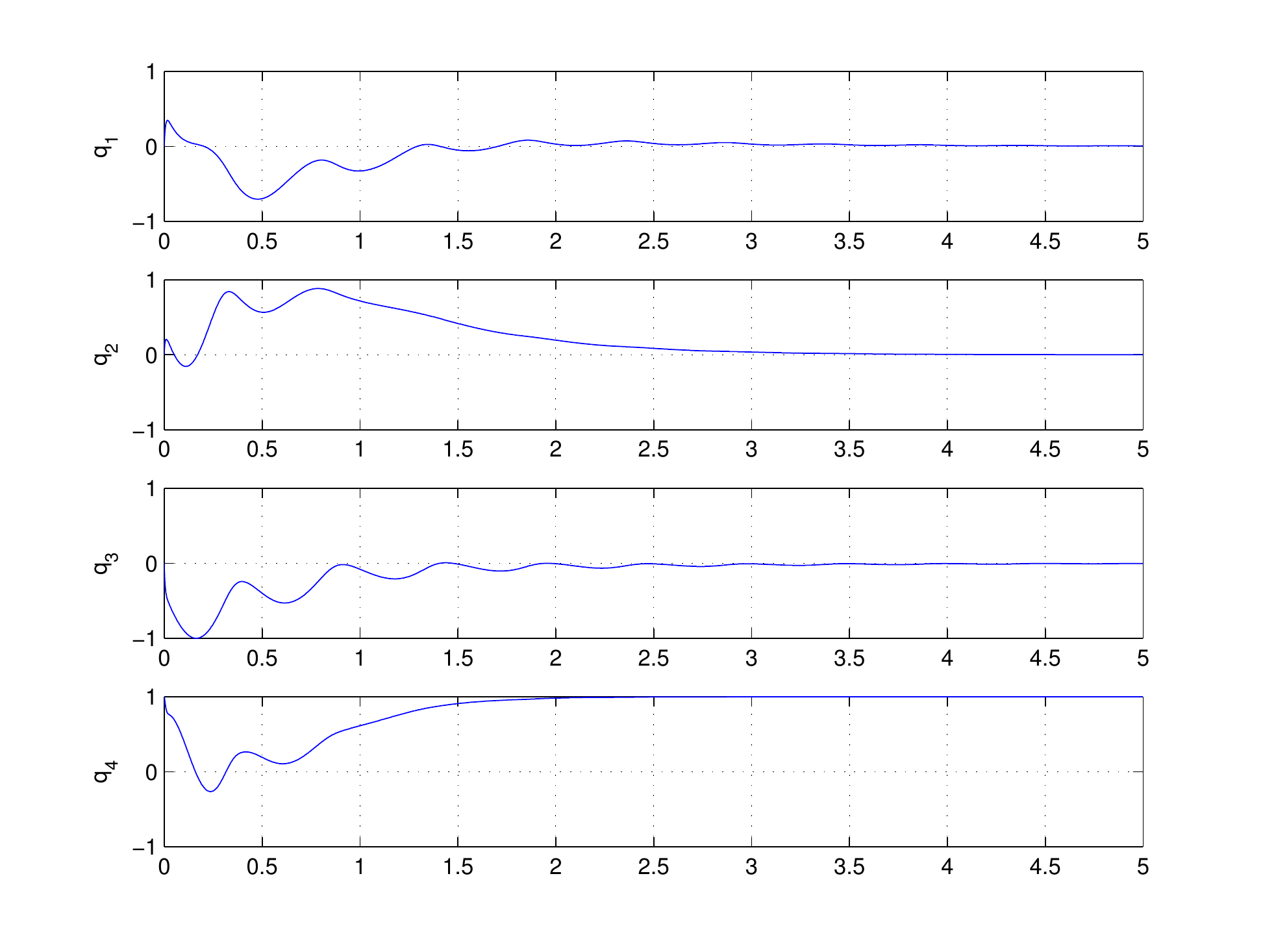}}\\[-6mm]
\subfigure{\includegraphics[width=8.4cm]{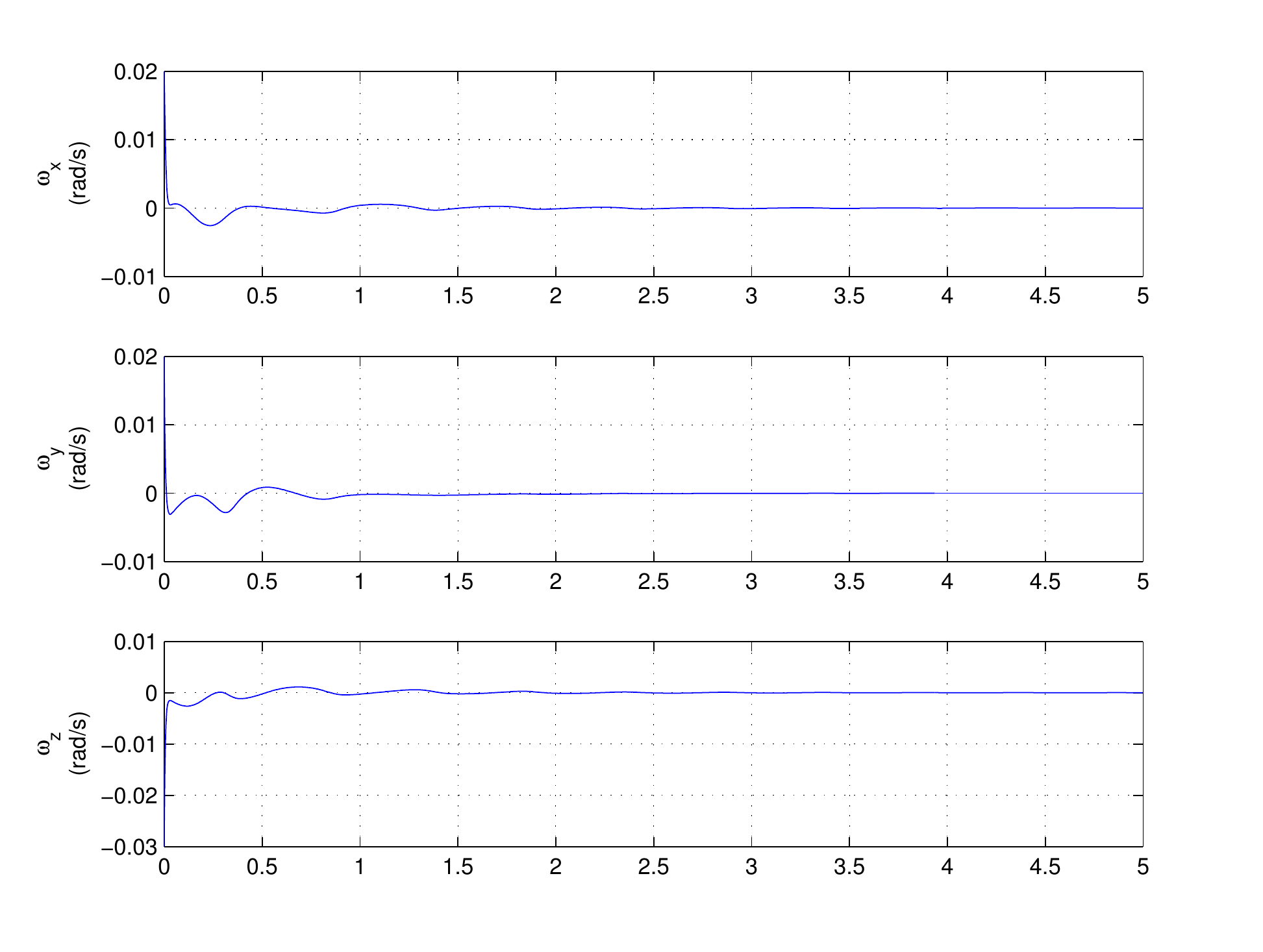}}\\[-5mm]
\subfigure{\includegraphics[width=8.4cm]{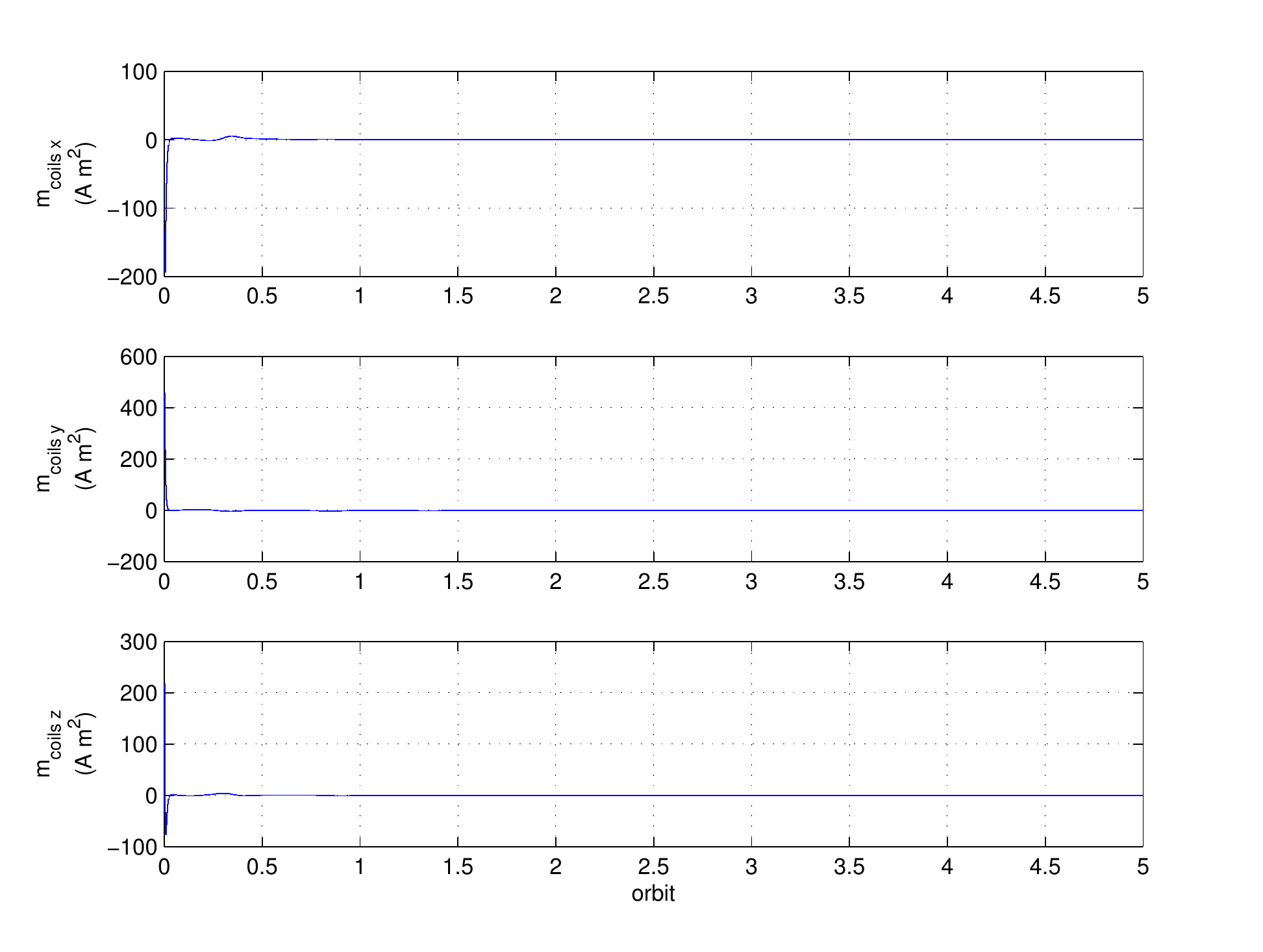}}\\[-6mm]
\caption{Evolutions of (\ref{nonlin_tv}) (\ref{m_coils_dc}) with $k_1=2 \ 10^{11}$, $k_2=3\ 10^{11}$, $T=20$ s, and $\epsilon=10^{-3}$.}
\label{evolutions_state_fdbck}
\end{figure}

A similar approach has been followed to select parameters and sampling interval for output-feedback (\ref{output_feedback_law_discrete}); the obtained results validate Proposition \ref{main_prop_output}, but they are omitted for lack of space.

\section{Conclusion}

In this work two magnetic control laws for spacecraft attitude stabilizations have been presented. Both magnetic control laws possess the feature of generating piecewise-constant magnetic dipole moments; the latter aspect makes them easier to implement than continuos-time control laws; in fact, implementation of  continuos-time laws requires a discretization process,  in which the discretization interval is usually selected by trial and error; the latter trial and error process in not necessary using the approach here proposed since the sampling interval can be determined in a systematic way.

\section*{Acknowledgment}

The author acknowledges Prof. J. L. Mancilla-Aguilar for his help and Prof. A. Nascetti for fruitful discussions.

% trigger a \newpage just before the given reference
% number - used to balance the columns on the last page
% adjust value as needed - may need to be readjusted if
% the document is modified later
%\IEEEtriggeratref{8}
% The "triggered" command can be changed if desired:
%\IEEEtriggercmd{\enlargethispage{-5in}}

% references section

% can use a bibliography generated by BibTeX as a .bbl file
% BibTeX documentation can be easily obtained at:
% http://www.ctan.org/tex-archive/biblio/bibtex/contrib/doc/
% The IEEEtran BibTeX style support page is at:
% http://www.michaelshell.org/tex/ieeetran/bibtex/
\bibliographystyle{IEEEtran}
% argument is your BibTeX string definitions and bibliography database(s)
%\bibliography{IEEEabrv,../bib/paper}
\bibliography{IEEEabrv,biblio}

%
% <OR> manually copy in the resultant .bbl file
% set second argument of \begin to the number of references
% (used to reserve space for the reference number labels box)
%\begin{thebibliography}{1}
%
%\bibitem{IEEEhowto:kopka}
%H.~Kopka and P.~W. Daly, \emph{A Guide to \LaTeX}, 3rd~ed.\hskip 1em plus
%  0.5em minus 0.4em\relax Harlow, England: Addison-Wesley, 1999.
%
%\end{thebibliography}
%
%\bibliographystyle{IEEEtran}
%\bibliography{IEEEabrv,biblio}
%

% that's all folks
\end{document}